\long\def\comment#1{}
\begin{document}

% The file aaai.sty is the style file for AAAI Press
% proceedings, working notes, and technical reports.
%
\title{Mechanism Design in Social Networks}

\author{Bin Li$^a$, Dong Hao$^a$, Dengji Zhao$^b$ and Tao Zhou$^a$\\
$^a$Big Data Research Center, University of Electronic Science and Technology of China, Chengdu, China\\
\{libin@std.uestc., haodong@uestc.,  zhutou@ustc.\}edu.cn\\
$^b$School of Information Science and Technology, ShanghaiTech University, Shanghai, China\\
zhaodj@shanghaitech.edu.cn\\
}

%\author{Bin Li, Dong Hao, Dengji Zhao and Tao Zhou\\
%libin@std.uestc.edu.cn, haodong@uestc.edu.cn, d.zhao@soton.ac.uk, zhutou@uestc.edu\\
%CompleX Lab, Big Data Research Center, University of Electronic Science and Technology of China\\
%School of Electronics and Computer Science, The University of Southampton}

\maketitle

%\pdfinfo{
%/Title (Insert Your Title Here)
%/Author (Put All Your Authors Here, Separated by Commas)}
%\setcounter{secnumdepth}{0}
% \begin{document}
%% The file aaai.sty is the style file for AAAI Press
%% proceedings, working notes, and technical reports.
%%
%\title{Formatting Instructions \\for Authors Using \LaTeX{}}
%\author{AAAI Press\\
%Association for the Advancement of Artificial Intelligence\\
%2275 East Bayshore Road, Suite 160\\
%Palo Alto, California 94303\\
%}
%\maketitle

\newtheorem{theorem}{Theorem}
\theoremstyle{theorem}

\newtheorem{lemma}{Lemma}
\theoremstyle{theorem}
\newtheorem{theo}{Theorem}

\newtheorem{corollary}{Corollary}
\theoremstyle{theorem}

\theoremstyle{plain}

\newtheorem{prop}{Proposition}

\begin{abstract}
This paper studies an auction design problem for a seller to sell a commodity in a social network, where each individual (the seller or a buyer) can only communicate with her neighbors. The challenge to the seller is to design a mechanism to incentivize the buyers, who are aware of the auction, to further propagate the information to their neighbors so that more buyers will participate in the auction and hence, the seller will be able to make a higher revenue. We propose a novel auction mechanism, called information diffusion mechanism (IDM), which incentivizes the buyers to not only truthfully report their valuations on the commodity to the seller, but also further propagate the auction information to all their neighbors. In comparison, the direct extension of the well-known Vickrey-Clarke-Groves (VCG) mechanism in social networks can also incentivize the information diffusion, but it will decrease the seller's revenue or even lead to a deficit sometimes. The formalization of the problem has not yet been addressed in the literature of mechanism design and our solution is very significant in the presence of large-scale online social networks.
\end{abstract}

\section{Introduction}
Mechanism design is a representative interface integrating economics and artificial intelligence \cite{nisan2007algorithmic}. It utilizes game theoretic tools to model interactions between agents, and takes a systematic investigation of the design of institutions and how these institutions affect the outcomes of agents' interactions. Auction has been a common paradigm \cite{krishna2009auction} and a successful application \cite{Edelman2007Internet,bajari2003winner} in mechanism design. Although in almost every human-activated system, the social network matters \cite{jackson2008social,borgatti2009network}, the standard auction theory only considers buyers who are called together by the seller, while the effects of social links connecting these buyers to other potential buyers who are not known by the seller have not yet been explored. Without an effective auction mechanism that facilitates information propagation in the social network, a potential buyer with a high valuation may not be aware of the auction since she is not directly connected to the seller and her neighbors do not wish to tell her the auction information. Therefore, in order to simultaneously improve the seller's revenue and the social welfare, which are two primary and generally conflicting objectives in any auction design \cite{myerson1981optimal,Krishna1998Efficient}, it is of great importance to design auction mechanisms in social networks, which not only satisfy the classical criteria such as incentive compatibility and individual rationality, but also incentivize individuals to spread the auction information in the social network.

The main difficulty for incentivizing the individuals to spread the auction information lies in the general conflict that the seller wishes to attract more people to participate in the auction in order to increase her revenue, but participants have no incentive to bring more competitors into the auction. This essentially reflects the conflict between the system's optimality and the individuals' self-interests. Under all classic auction protocols including the Vickrey-Clarke-Groves (VCG) mechanism \cite{vickrey1961counterspeculation,clarke1971multipart,groves1973incentives}, the seller can only convene a small amount of people into the auction while the other potential buyers in the social network are all excluded from the auction, even though they may have higher valuations. As a result, the seller's revenue can only be locally optimized, or, the efficiency (social welfare) can only be locally maximized.

The goal of this paper is to design mechanisms that can improve both the efficiency and the seller's revenue, which is not achievable with existing mechanisms. To this end, we propose a novel mechanism that incentivizes all participants to further propagate the auction information, which will lead to a more efficient allocation and more revenue to the seller. The intuition behind this mechanism is that information diffusion is rewarded if it leads to a more efficient allocation.
%Our objective is to improve the mechanism's efficiency as well as the seller's revenue.
Our contributions advance the state of the art in the following ways:
\begin{itemize}
\item We propose the very first model for selling one item in a social network, where each participant in the network can only communicate with her neighbors.
%, in which the mechanism's allocation policy and payment policy depend partially on the agents' own preferences and partially on their positions in the social network. Under such framework, the effects of social network structure are easy to be formalized and analyzed.
\item In the network setting, we first show that the well-known VCG mechanism can be extended to incentivize information diffusion and therefore all potential buyers in the network will join the auction to achieve the optimal social welfare. However, it does not increase the revenue of the seller and even leads to a deficit sometimes, which will disincentivize the seller to apply such a mechanism.  %We analyzed the efficiency and optimality of the generalized VCG mechanism in social networks and showed that although VCG is efficient, it is not necessarily budget-balanced. In fact, the deficit of VCG mechanism is strongly correlated with the social network structure.
\item  Therefore, we propose another novel mechanism called information diffusion mechanism (IDM) to solve the deficit issue of the VCG. IDM not only incentivizes information diffusion as the VCG does, but also brings more revenue to the seller. In particular, the revenue generated by IDM is always above the revenue created by the VCG and it is non-negative. %We designed the information diffusion mechanism (IDM), which not only incentivizes the buyers to report their true valuations, but also incentivizes them to propagate the auction information to all their neighbors. Furthermore, we showed that under IDM, the seller's revenue is the upper bound of that of the VCG mechanism in social network and its revenue and efficiency are both significantly improved comparing with the traditional second price auctions.
\end{itemize}

The remainder of the paper is organized as follows. We first define the bidder's actions and the seller's revenue for a single item auction in a social network, extend the classical incentive compatibility to the social network setting, and then define some general concepts of an auction mechanism in such a framework. We then generalize the well-known VCG mechanism into the social network setting and analyze its performance. Following the weak performance of the VCG, we propose our information diffusion mechanism (IDM) and prove that it has a remarkable performance compared to the VCG. Finally, we summarize this work and discuss future work.

\section{The Model}
\theoremstyle{definition}
\newtheorem{defn}{Definition}

We consider a social network which consists of $n$ agents denoted by a set $N = \{1, \cdots, n\}$.  For each agent $i\in N$, she has a set of neighbors $r_i\subseteq N\setminus \{i\}$ with whom $i$ can directly exchange information (assume that $r_i\neq \emptyset$). One agent in this network, called the seller $s$, has an item to sell. Each agent $i\in N$ has a valuation $v_i \geq 0$ on the item. For the seller $s$, $v_s$ represents her reserve price, which is assumed to be zero in this paper. Initially, only the seller's neighbors $r_s$ know that $s$ is selling the item.

Given the above setting, the goal of the seller is to get more agents from the network to participate in the auction and thus to maximize her revenue. Since the seller cannot directly inform all agents other than $r_s$, the challenge to her is to build a mechanism to incentivize agents who have received the auction information to further propagate the information to their neighbors, which has not yet been tackled in the literature.

In this paper, the auction mechanism asks each agent to both report her valuation on the item and to spread the auction information to her neighbors. Accordingly, an agent $i$'s action in the mechanism is defined as $a_i'=(v_i', r_i')$, where $v_i'$ represents the valuation she reports and $r_i'\subseteq r_i$ represents the neighbors to whom she tells the auction information. Let $a_i=(v_i,r_i)$ denote the \emph{truthful action} that $i$ truthfully declares her valuation and informs the auction information to all her neighbors. We assume that once an agent received the auction information, %whatever from the seller or some neighboring buyers,
she will be able to participate in the auction by reporting her valuation and further spreading the auction information.  %Denote ${\bf a}^*$ as the truthful action set of all buyers.
 %An agent $i$ is called \emph{truthful} if $a_i'=(v_i', r_i')=(v_i, r_i)=a_i$.s

Let $A_i = V_i\times \mathcal{P}(r_i) \cup \{null\}$ be $i$'s action space, where $V_i$ is $i$'s valuation space, $\mathcal{P}(r_i)$ is the power set of $r_i$ and $null$ is a dummy action for $i$ when $i$ has not received the auction information or $i$ does not want to participate in the auction. Let $N_{-s} = N\setminus \{s\}$, ${\bf a'} = (a_i'\in A_i)_{i\in N_{-s}}$ be an action profile of all buyers, ${\bf a'}_{-i}$ be the action profile of all buyers except $i$ and ${\bf a'} = (a_i', {\bf a'}_{-i})$.

\begin{defn}
We say an action profile ${\bf a'}$ is \emph{feasible} if $a_i' = null$ for all $i\in N_{-s}$ who cannot receive the auction information following the actions of the others.
\end{defn}

Let $A$ be the set of all feasible action profiles of all buyers $N_{-s}$, and $A_{-i}^{a'_i} = \{{\bf a'}_{-i}| (a'_i, {\bf a'}_{-i}) \in A\}$ be the set of all feasible action profiles of all buyers except $i$ given that $i$'s action is $a'_i$. Given any non-feasible action profile ${\bf a'}$, we define a function $f$ that transforms ${\bf a'}$ to a feasible action profile $f({\bf a'}) = {\bf a''} \in A$, where
\begin{equation*}
 a_i'' =
 \begin{cases}
    a_i' & \text{if $i$ receives the information following ${\bf a'}_{-i}$},\\
    null           & \text{otherwise.}\\
 \end{cases}
\end{equation*}

Given the buyers' action profile, we formally define the auction mechanism and the related concepts in the rest of this section.

\begin{defn}
An \emph{auction mechanism} $\mathcal{M}$ in the social network is defined by an allocation policy $\pi = \{\pi_i\}_{i\in N_{-s}}$ and a payment policy $p=\{p_i\}_{i\in N_{-s}}$, where $\pi_i:A\rightarrow \{0,1\}$ and $p_i:A\rightarrow \mathbb{R}$ are the allocation and payment functions for $i$ respectively.
\end{defn}

Given the buyers' feasible action profile ${\bf a}^\prime$, $\pi_i({\bf a}^\prime) = 1$ means that $i$ receives the item, while $\pi_i({\bf a}^\prime) = 0$ means that $i$ does not receive the item. $p_i({\bf a}^\prime) \geq 0$ indicates that $i$ pays the auctioneer $p_i({\bf a}^\prime)$, and $i$ receives $|p_i({\bf a}^\prime)|$ from the auctioneer if $p_i({\bf a}^\prime) < 0$. We say an allocation policy $\pi$ is \emph{feasible} if for all ${\bf a}' \in A$, $\sum_{i\in N_{-s}} \pi_i({\bf a}') \leq 1$, and $\pi_i({\bf a}') = 1$ implies $a'_i \neq null$. That is, a feasible allocation can only allocate the item to at most one buyer who has participated in the auction. In what follows, we only consider feasible allocations.

\begin{defn}
An allocation $\pi$ is \emph{efficient} if for all ${\bf a}' \in A$,
$$\pi \in {\arg\max}_{\pi^\prime \in \Pi} \sum_{i\in N_{-s}, a'_i \ne null} \pi^\prime_i({\bf a}')v_i^\prime$$
where $\Pi$ is the set of all feasible allocations.
\end{defn}

Given a buyer $i$ of truthful action $a_i = (v_i, r_i)$, a feasible action profile ${\bf a}^\prime$ and a mechanism $\mathcal{M} = (\pi, p)$, the \emph{utility} of $i$ under the allocation $\pi({\bf a}^\prime)$ and the payment $p({\bf a}^\prime)$ is quasilinear and is defined as:
\begin{equation*}
u_i(a_i, {\bf a}^\prime, (\pi,p)) = \pi_i({\bf a}^\prime)v_i  - p_i({\bf a}^\prime).
\end{equation*}
We say a mechanism is individually rational if for each buyer, her utility is non-negative when she truthfully reports her valuation, no matter to whom she tells the auction information and what the others do.

\begin{defn}
A mechanism $(\pi,p)$ is \emph{individually rational} (IR) if $u_i(a_i, ((v_i, r_i^\prime),{\bf a}_{-i}^\prime), (\pi,p)) \geq 0$ for all $i\in N_{-s}$, all $r_i^\prime \in \mathcal{P}(r_i)$, and all ${\bf a}_{-i}^\prime \in A_{-i}^{(v_i, r_i^\prime)}$.
\end{defn}

Another concept is called incentive compatibility or truthfulness. In a traditional auction setting where a seller sells a commodity to multiple buyers, incentive compatibility means that for each buyer, reporting her valuation on the commodity truthfully is a dominant strategy. However, in our model, auction information diffusion is also considered. Here, incentive compatibility means that for each buyer, reporting her true valuation and diffusing the auction information to all her neighbors is a dominant strategy.

\begin{defn}
 A mechanism $(\pi, p)$ is \emph{incentive compatible} (IC) if
 %$u_i(a_i, ((v_i, r_i^\prime),{\bf a}_{-i}^\prime), (\pi,p)) \geq u_i(a_i, ((v_i^\prime, r_i^\prime),{\bf a}_{-i}^\prime), (\pi,p))$ for all $i\in N_{-s}$, all $a_i \in A_i$, all $v_i^\prime \in V_i$, all $r_i^\prime \in \mathcal{P}(r_i)$, and all ${\bf a}_{-i}^\prime \in A_{-i}$.
 $u_i(a_i, (a_i,{\bf a}_{-i}^\prime), (\pi,p)) \geq u_i(a_i, (a_i^\prime, {\bf a}_{-i}^{\prime\prime}), (\pi,p))$ for all $i\in N_{-s}$, all $a_i^\prime \in A_i$, and all ${\bf a}_{-i}^\prime \in A_{-i}^{a_i}$, where $(a_i^\prime, {\bf a}_{-i}^{\prime\prime}) = f(a_i^\prime, {\bf a}_{-i}^\prime)$.
\end{defn}

Note that in this definition, ${\bf a}_{-i}^\prime$ is changed to ${\bf a}_{-i}''$ when $i$'s action is changed from $a_i$ to $a_i^\prime$. This is because if an action $a_i^\prime$ does not spread the auction information to all $i$'s neighbors, then some buyers, who receive the information under $a_i$, may not be able to receive the information under $a_i^\prime$. That is, although $(a_i, {\bf a}_{-i}^\prime)$ is a feasible action profile, it does not guarantee that $(a_i^\prime, {\bf a}_{-i}^\prime)$ is also feasible. Therefore, in the IC definition, we apply the function $f$ to get the corresponding feasible action profile of the other buyers when $i$'s action is changed.

Given a feasible action profile ${\bf a}^\prime$ and a mechanism $\mathcal{M} = (\pi, p)$, the seller's \emph{revenue} generated by $\mathcal{M}$ is defined by the sum of all buyers' payments, denoted by $Rev^{\mathcal{M}}({\bf a}^\prime) = \sum_{i\in N_{-s}} p_i({\bf a}^\prime)$.

\begin{defn}
A mechanism $\mathcal{M}=(\pi, p)$ is \emph{weakly budget balanced} if for all ${\bf a}' \in A$,  $Rev^{\mathcal{M}}({\bf a}') \geq 0$.
\end{defn}

In the rest of this paper, we design mechanisms that satisfy the above concepts. Before doing so, we will introduce an important concept related to auction information diffusion, called diffusion critical node.

\begin{defn}
\label{diffusion critical node}
Given the buyers' feasible action profile ${\bf a}^\prime$, for any two buyers $i\neq j\in N_{-s}$ such that $a'_i ,a'_j \ne null$, we say $i$ is $j$'s \emph{diffusion critical node} if all the information diffusion paths started from the seller $s$ to $j$ have to pass $i$.
%Note that, a buyer $j$ may have no diffusion critical node (e.g. the seller's neighbors) or more than one diffusion critical nodes.
\end{defn}

Intuitively, if $i$ is $j$'s diffusion critical node under the action profile ${\bf a}^\prime$, $j$ will not be able to receive the auction information if $i$ has not propagated the information to her neighbors. Note that, a buyer $j$ may have no diffusion critical node (e.g. the seller's neighbors) or more than one diffusion critical nodes. Diffusion critical nodes will play a vital role in the mechanisms proposed in the paper.

\section{VCG in Social Networks}
%% move all newtheorem to the beginning of the docment
Given the social network and the extended definition of incentive compatibility, we will first check whether traditional incentive compatible mechanisms can be extended to incentivize information diffusion. In this section, we implement the well-known Vickrey-Clarke-Groves (VCG) mechanism under the social network setting. We prove that although VCG mechanism can incentivize information diffusion, it will decrease the seller's revenue and even lead to a deficit sometimes.
%In this section, we generalize the classic VCG mechanism and introduce the social network into its design. We identify that in a social network environment, the efficiency of VCG mechanism may be significantly improved, yet its budget balanced property may not necessarily hold.

The VCG mechanism applies an efficient allocation and charges each participant the social welfare decrease of the others caused by her participation \cite{vickrey1961counterspeculation,clarke1971multipart,groves1973incentives}.
In a classical setting where a seller sells one item to a set of buyers who are known to the seller in advance, the VCG mechanism allocates the item to the buyer who has the highest bid (the winner) and charges the buyer the second highest bid among all the buyers (also known as the second price auction). The other buyers who do not receive the item will not pay or get anything. However, in a social network setting, if a buyer does not get the item, but she is the winner's diffusion critical node, she will receive a reward. This is because the winner will not be able to participate in the auction if her diffusion critical buyer(s) does not spread the information.

%Since buyers with dummy actions play no role when designing our mechanisms, therefore for simplifying the symbols in this and following sections, no matter where we mention a buyer, we mean a buyer without dummy action. The \textcolor{red}{extended VCG} payment in our model is defined as follows.

Formally, given the buyers' action profile ${\bf a}' \in A$, for each buyer $i\in N_{-s}$ such that $a'_i \ne null$, let
\begin{equation*}
s_i = \{j\in N_{-s}| \text{$i$ is $j$'s diffusion critical node}\}
\end{equation*}
denote the set of all buyers who share $i$ as their diffusion critical node, and let
%\begin{equation*}
$d_i=s_i \cup \{i\}$.
%\end{equation*}
If $i$ does not participate in the auction (i.e., $i$ chooses a dummy action $null$), then all buyers from $s_i$ also cannot participate in the auction as they will not be able to receive the auction information without $i$. Let $-d_i = N_{-s} \setminus d_i$ and ${\bf a}'_{-d_i}$ be the action profile of $-d_i$ in ${\bf a}'$. It is evident that ${\bf a}'_{-d_i}$ cannot be changed by $i$'s action.
%Let $\overline{d_i} = N_{-s}\setminus d_i$ and ${\bf a}_{\overline{d}_i}$ be the action profile of all buyers in $\overline{d_i}$.
Now the VCG mechanism allocates the item to the buyer with the highest valuation report (i.e., an efficient allocation) and the payment for each buyer $i$ is defined as:
\begin{equation}
\label{eq_vcg_pay}
p_i^{vcg}({\bf a}') = W({\bf a}'_{-d_i}) - (W({\bf a}')- \pi_i^*({\bf a}')v_i')
\end{equation}
where $W({\bf a}'_{-d_i}) = \sum\nolimits_{j \in -d_i, a'_j \neq null} \pi_j^*({\bf a}'_{-d_i})v_j'$, $W({\bf a}') = \sum\nolimits_{j \in N_{-s}, a'_j \neq null} \pi_j^*({\bf a}')v_j'$, and $\pi^*$ is an efficient allocation.

In the VCG payment definition, the first term $W({\bf a}'_{-d_i})$ is the social welfare of the efficient allocation without $i$'s participation, and the second term $W({\bf a}')- \pi_i^*({\bf a}')v_i'$ is the social welfare of the efficient allocation minus $i$'s valuation on the allocation (i.e. the social welfare of all buyers except $i$ on the efficient allocation). Therefore, the VCG payment for $i$ is the social welfare decrease of the other buyers caused by $i$'s participation (it can be negative).

In what follows, we prove that the VCG mechanism is individually rational and incentive compatible. Before doing this, we show one property of $d_i$ with respect to $r_i$ in Lemma~\ref{membership_di}.

\begin{lemma}\label{membership_di}
%$ \overline {d_i} $ is independent of ${{r}_i'}$, while $ d_i$ may losses some members if ${{r}_i' }\ne {\bf r}_i$.
%{\color{red}[I still think that this lemma is not useful, it is very trivial and even the proof is a bit hard to write.]}
Given the buyers' action profile ${\bf a'}$, for each buyer $i$ and for any $r_i^\prime, r_i^{\prime\prime} \in \mathcal{P}(r_i)$, if $r_i^\prime \subseteq r_i^{\prime\prime}$, then the corresponding $d_i$, denoted by $d_i^\prime$ and $d_i^{\prime\prime}$, satisfy $d_i^\prime \subseteq d_i^{\prime\prime}$.
\end{lemma}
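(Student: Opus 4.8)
The plan is to reduce the claim to a monotonicity property of the set $s_i$ of buyers who have $i$ as a diffusion critical node: since $d_i = s_i \cup \{i\}$, it suffices to prove $s_i^\prime \subseteq s_i^{\prime\prime}$. First I would model the diffusion induced by $i$'s choice of informed neighbors as reachability in a directed graph. Fixing the other buyers' actions ${\bf a}_{-i}^\prime$, let $G_R$ denote the directed graph on $N$ with arcs from $s$ to every node in $r_s$, arcs from each buyer $k \neq i$ to every node in her reported neighbor set, and arcs from $i$ to every node in $R$, where $R \in \{r_i^\prime, r_i^{\prime\prime}\}$. A buyer participates exactly when she is reachable from $s$ in $G_R$; since a node that is unreachable from $s$ has no out-arc lying on any path from $s$, including the intended arcs of non-participants leaves the reachable set unchanged, so the feasibility map $f$ is captured automatically. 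Accordingly, a diffusion path from $s$ to $j$ is precisely a directed path from $s$ to $j$ in $G_R$, and $i$ is $j$'s diffusion critical node iff $j$ is reachable from $s$ and every such path passes through $i$.

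The key structural observation is that $r_i^\prime \subseteq r_i^{\prime\prime}$ implies $G_{r_i^\prime} \subseteq G_{r_i^{\prime\prime}}$ as edge sets: the two graphs agree on every arc except the out-arcs of $i$, and passing from $r_i^\prime$ to $r_i^{\prime\prime}$ only adds such arcs. Note also that $i$'s own reachability from $s$ depends only on incoming paths and is therefore independent of $i$'s out-arcs, so $i$ participates under $r_i^\prime$ iff she participates under $r_i^{\prime\prime}$. I would dispatch the degenerate case where $i$ does not participate (then $s_i^\prime = s_i^{\prime\prime} = \emptyset$ and the inclusion is trivial) separately, and assume $i$ participates thereafter.

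Now I would verify that both defining conditions for membership in $s_i$ are preserved under the enlargement. Take any $j \in s_i^\prime$. Reachability is immediate: a path from $s$ to $j$ in $G_{r_i^\prime}$ is also a path in the supergraph $G_{r_i^{\prime\prime}}$, so $j$ stays reachable and hence a participant. For criticality, suppose toward a contradiction that $G_{r_i^{\prime\prime}}$ contains a path $P$ from $s$ to $j$ that avoids $i$. Then $P$ uses none of the out-arcs of $i$, and in particular none of the newly added arcs $i \to \ell$ with $\ell \in r_i^{\prime\prime} \setminus r_i^\prime$; hence every arc of $P$ already lies in $G_{r_i^\prime}$, making $P$ a path from $s$ to $j$ in $G_{r_i^\prime}$ that avoids $i$, contradicting $j \in s_i^\prime$. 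Therefore $j \in s_i^{\prime\prime}$, giving $s_i^\prime \subseteq s_i^{\prime\prime}$ and hence $d_i^\prime \subseteq d_i^{\prime\prime}$.

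The main obstacle is the criticality direction: enlarging the informed set adds arcs, and adding arcs to a graph can in general create shortcut paths that destroy a critical node. The crux is that every added arc emanates from $i$, so any path using one of them necessarily visits $i$; consequently the new arcs cannot produce a path from $s$ to $j$ that bypasses $i$. I would also take care to justify the equivalence ``participant $\iff$ reachable from $s$'' before invoking it, since it is precisely this bookkeeping that lets me treat diffusion paths as purely graph-theoretic paths and keeps the argument monotone.
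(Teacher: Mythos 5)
Your proof is correct and takes essentially the same approach as the paper's: the paper likewise argues that when $i$ enlarges her diffusion set from $r_i^\prime$ to $r_i^{\prime\prime}$, all old diffusion paths to any $j\in d_i^\prime$ are unaffected and any newly created path must pass through $i$, since only out-arcs of $i$ are added. Your graph-reachability formalization, the handling of the feasibility map $f$, and the observation that $i$'s own participation is independent of her out-arcs simply make rigorous what the paper states informally.
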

\begin{proof}
For all $j\neq i\in d_i^\prime$, we have that all the diffusion paths to reach $j$ have to pass $i$ when $i$ propagates the information to $r_i^\prime$. When $i$ changes the diffusion set from $r_i^\prime$ to $r_i^{\prime\prime}$, the old set of paths to reach $j$ will not be affected and new paths to reach $j$ might be created, but all the new paths have to pass $i$ (as the other buyers' actions have not been changed). Therefore, we have $d_i^\prime \subseteq d_i^{\prime\prime}$.
%On the one hand, agent $i$ is not a diffusion critical node for all agents in $\overline {d_i}$, which means even if $i$ does not diffuse information to them, they can still receive this information from other agents. Therefore what ever ${r}_i' $ is, any member in $\overline {d_i}$ is still able to participate in the mechanism. On the other hand, if $d_i \cap {r}_i \setminus {r}_i' \ne \emptyset $, and if there exists some agent $k\in d_i \cap {r}_i \setminus {r}_i'$ such that $d_k \setminus\{k\} \ne \emptyset $, then $d_i $ losses all the members in $d_k $ comparing with ${r}_i' = {r}_i$.
\end{proof}
%It should be aware of that buyer $i$'s action $r_i'$ does not affect the action profile in $-d_i$ according to the definition of $d_i$.

%\begin{figure}[h]
%  \centering
%  \includegraphics[width=2.3in]{critical_node.pdf}
%  \caption{Abstract structure of graph}\label{critical_node}
%\end{figure}
%
%We give an intuitive illustration of this lemma in Figure {\ref{critical_node}}. The arrows denote the direction of information diffusion. For buyer $i$, whatever ${{r}_i'}$ is, set ${{{\overline d}_i}}$ which is always consisted of $j$ and $k$. For agents in $d_i$ which includes $\{i,n_1,n_2,n_3\}$, if agent $i$ does not diffuse the auction info to her neighbor $n_1$, consequently, the agent $n_1$ cannot hear the auction, that is, $d_i$ decreases.
%}

\begin{theorem}
%VCG mechanism in social networks is efficient, individually rational and diffusion incentive compatible.
The VCG mechanism is individually rational and incentive compatible.
\end{theorem}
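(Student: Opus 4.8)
The plan is to reduce both claims to the standard ``pivotal'' analysis of VCG, after isolating the one genuinely new ingredient: the payment term $W(\mathbf{a}'_{-d_i})$ must be shown to be insensitive to everything agent $i$ does, including how she diffuses.

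First I would substitute the payment \eqref{eq_vcg_pay} into the utility and simplify. Writing $\mathbf{a}' = ((v_i', r_i'), \mathbf{a}_{-i}')$ and using $W(\mathbf{a}') = \pi_i^*(\mathbf{a}')v_i' + \sum_{j\ne i}\pi_j^*(\mathbf{a}')v_j'$, a direct calculation gives
\[
u_i(a_i, \mathbf{a}', (\pi^*, p^{vcg})) = \Big(\pi_i^*(\mathbf{a}')v_i + \sum_{j\ne i,\, a_j'\ne null}\pi_j^*(\mathbf{a}')v_j'\Big) - W(\mathbf{a}'_{-d_i}),
\]
that is, the value of the efficient-for-reports allocation with $i$'s own share re-valued at her true $v_i$, minus the pivot term. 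In particular, when $i$ reports truthfully ($v_i'=v_i$) the bracket collapses to $W(\mathbf{a}')$, so $u_i = W(\mathbf{a}') - W(\mathbf{a}'_{-d_i})$.

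The key structural step, which I would flag as the main obstacle, is to prove that $W(\mathbf{a}'_{-d_i})$ does not depend on $i$'s action at all. Since the sum in $W(\mathbf{a}'_{-d_i})$ ranges only over participating buyers, and a participating $j\ne i$ lies in $-d_i$ exactly when $i$ is \emph{not} her diffusion critical node, $-d_i$ consists precisely of the participating buyers reachable from $s$ along some path avoiding $i$. Such reachability and the reports along those paths are governed entirely by $\mathbf{a}_{-i}'$, hence are fixed as $i$ varies $(v_i', r_i')$; when $i$ enlarges her diffusion set the only new participants join $d_i$ (this is exactly the content of Lemma~\ref{membership_di}, giving $d_i'\subseteq d_i''$), so the participating members of $-d_i$, and therefore $W(\mathbf{a}'_{-d_i})$, are unchanged. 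This is where the network structure, rather than classical VCG reasoning, does the work.

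Given this, IR is immediate: with truthful valuation $u_i = W(\mathbf{a}') - W(\mathbf{a}'_{-d_i}) \ge 0$, because every feasible allocation over $-d_i$ is also feasible over $N_{-s}$ (just withhold the item from $d_i$), so restricting the buyer set cannot raise the efficient welfare. For IC, I would hold $\mathbf{a}_{-i}'$ fixed and let $P(r_i')$ denote the participant set induced by $i$'s diffusion set (with $f$ turning the unreached buyers into $null$). The pivot term is the constant $W(\mathbf{a}'_{-d_i})$, so it suffices to maximize the bracket over $i$'s actions. For any action, the bracket is the value of one particular feasible allocation on $P(r_i')$ scored with $i$'s true value, hence at most $W^{\mathrm{true}}(P(r_i'))$, the best such value; reporting $v_i'=v_i$ attains this bound, because the efficient-for-reports allocation then coincides with the allocation that is efficient for the true value $v_i$. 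Finally, since $P(r_i')\subseteq P(r_i)$ when $r_i'\subseteq r_i$ (again Lemma~\ref{membership_di}) and all valuations are non-negative, adding participants cannot lower the achievable welfare, so $W^{\mathrm{true}}(P(r_i'))\le W^{\mathrm{true}}(P(r_i))$, which the truthful action $(v_i,r_i)$ achieves. Chaining these inequalities gives $u_i(\text{truthful})\ge u_i(a_i')$ for every $a_i'$; the degenerate deviation $a_i'=null$ yields utility $0\le u_i(\text{truthful})$ by IR, completing the argument.
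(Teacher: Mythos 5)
Your proof is correct and takes essentially the same route as the paper's: the same utility decomposition $u_i = W(\mathbf{a}') + \pi_i^*(\mathbf{a}')(v_i - v_i') - W(\mathbf{a}'_{-d_i})$, with IR following from efficiency of $\pi^*$, truthful reporting handled by the classical pivotal argument, and full diffusion handled via the monotonicity of $d_i$ from Lemma~\ref{membership_di}. The only difference is one of rigor, not of approach: you explicitly prove (via reachability along paths avoiding $i$) that $W(\mathbf{a}'_{-d_i})$ is invariant to $i$'s entire action including under the $f$-transformation, a step the paper simply asserts as evident.
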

\begin{proof}
Given all buyers' true action profile ${\bf a}$ and their actual action profile ${\bf a}^\prime$, buyer $i$'s utility under the VCG mechanism $(\pi^*, p^{vcg})$ is $u_i(a_i,{\bf a'}, (\pi^*, p^{vcg}))= \pi_i^*({\bf a'})v_i - p_i^{vcg}({\bf a'}) = W({\bf a'}) + \pi_i^*({\bf a'})(v_i-v_i^\prime) - W({\bf a'}_{-d_i})$.

If buyer $i$ reports her valuation truthfully (i.e. $v_i^\prime = v_i$), we get $u_i(a_i,{\bf a'}, (\pi^*, p^{vcg})) = W({\bf a'}) - W({\bf a'}_{-d_i})$. It is evident that $W({\bf a'}) \geq W({\bf a'}_{-d_i})$ because the allocation $\pi^*$ is efficient. Therefore, $i$'s utility is non-negative when $i$ reports her valuation truthfully (no matter who she spreads the information to). That is, the VCG mechanism is individually rational.

%The utility of agent $i$ is $u_i(a_i,{\bf a'})= \pi_i(a_i,{\bf a'}) \cdot v_i - p_i(a_i,{\bf a'}) =W_{N_{-s}} ({\bf a'})- W_{{{\overline d}_i}}({{\bf a}'_{{{\overline d}_i}}})$. According to the allocation policy, the item is always allocated to the agent with the greatest valuation, therefore it's efficient.

%Moreover, since set ${\overline d}_i \subseteq N_{-s}$ and the valuation for every agent is non-negative, then $W_{{{\overline d}_i}}({{\bf a}'_{{{\overline d}_i}}}) \le W_{N_{-s}}({\bf a'})$. Therefore the utility for every agent is non-negative and the mechanism is IR.

Now we prove that the VCG mechanism is incentive compatible. We will prove this by showing that:
\begin{enumerate}
\item For each buyer $i$, fixing $r_i^\prime$, $i$'s utility is maximized by reporting $v_i$ truthfully.
\item For each buyer $i$, fixing $v_i^\prime$, $i$'s utility is maximized by diffusing the auction information to all her neighbors $r_i$.
\end{enumerate}

Given that $r_i^\prime$ is fixed (i.e. information diffusion is fixed), the first step is the traditional IC property of the VCG, which is proved by showing that (1) $i$'s payment does not depend on $i$'s reported valuation $v_i^\prime$ and (2) $i$'s utility is maximized by reporting $v_i$ truthfully. Since the VCG is known to be incentive compatible in this setting, we will omit the proof here. %The first point is evident from the definition of the VCG payment (see Equation~\eqref{eq_vcg_pay}). For the second point, if $i$ with report $v_i^\prime$ receives the item, then $i$'s utility is $v_i - W({\bf a'}_{-d_i})$. If $v_i$ is not the highest valuation among all buyers' valuation reports, then reporting $v_i$ truthfully will give her a utility $W({\bf a'}) - W({\bf a'}_{-d_i}) > v_i - W({\bf a'}_{-d_i})$. If $v_i$ is the highest valuation among all buyers' valuation reports, then reporting $v_i$ truthfully will give her the same utility $v_i - W({\bf a'}_{-d_i})$.

For the second step, since $-d_i$ is not affected by $i$'s action $r_i'$, $W({\bf a'}_{-d_i})$ does not depend on $r_i'$. By Lemma~\ref{membership_di}, we get that diffusing to more neighbors will get more buyers to participate in the auction, which will potentially increase $W({\bf a'})$, the highest valuation among the participated buyers. Therefore, for each buyer $i$, when $v_i^\prime$ is fixed, diffusing the auction information to all her neighbors $r_i$ maximizes her utility $W({\bf a'}) + \pi_i^*({\bf a'})(v_i-v_i^\prime) - W({\bf a'}_{-d_i})$.

By proving the above two properties, we conclude that for each buyer, reporting her valuation truthfully and diffusing the information to all her neighbors is a dominant strategy, i.e., the VCG is IC.
\end{proof}

\begin{prop}
\label{prop_vcg}
The VCG mechanism is not weakly budget balanced.
\end{prop}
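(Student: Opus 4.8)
The plan is to refute this universal statement by exhibiting a single counterexample: a concrete social network together with a valuation profile whose induced VCG revenue is strictly negative. Since weak budget balance requires $Rev^{\mathcal{M}}({\bf a}') \geq 0$ for \emph{all} feasible action profiles, producing one instance with a deficit disproves the property.

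First I would derive a closed form for the revenue that exposes the source of the deficit. Summing the payment formula~(\ref{eq_vcg_pay}) over all participating buyers and using $\sum_{i} \pi_i^*({\bf a}')v_i' = W({\bf a}')$ (only the winner contributes under an efficient allocation), I obtain $Rev^{\mathcal{M}}({\bf a}') = \sum_{i} W({\bf a}'_{-d_i}) - (k-1)W({\bf a}')$, where $k$ is the number of participants. This makes the intuition transparent: each of the winner's diffusion critical nodes receives a \emph{reward} (a negative payment) for enabling the winner's participation, and the total of these rewards can outweigh the single positive payment collected from the winner.

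Then I would instantiate this on a path network $s - 1 - 2 - 3$, so that reaching buyer $2$ requires buyer $1$ and reaching buyer $3$ requires both $1$ and $2$; by Definition~\ref{diffusion critical node} this gives $d_1 = \{1,2,3\}$, $d_2 = \{2,3\}$, and $d_3 = \{3\}$. Taking the truthful action profile with valuations $v_1 = 1$, $v_2 = 2$, $v_3 = 3$, the efficient allocation awards the item to buyer $3$. Computing the three payments from~(\ref{eq_vcg_pay}) yields $p_1 = 0 - 3 = -3$, $p_2 = 1 - 3 = -2$, and $p_3 = 2 - 0 = 2$, for a total revenue of $-3 < 0$. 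The two critical nodes $1$ and $2$ are each rewarded for propagating the auction to the high-value buyer $3$, while $3$ pays only the second-highest bid; the rewards dominate and the seller runs a deficit.

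The main obstacle is not the argument itself but choosing a topology and valuations that make the deficit unambiguous. The key design choice is to place the highest-valuation buyer deep in the network behind a chain of critical nodes, so that many buyers must be rewarded while only one payment is collected; a path graph achieves this minimally. One must also check that the chosen profile is feasible and that the sets $d_i$ are computed correctly, after which the numerical verification is routine.
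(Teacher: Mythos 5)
Your proof is correct and takes essentially the same approach as the paper: both exhibit a path network in which the highest-valuation buyer sits behind a chain of diffusion critical nodes, so the rewards paid to those nodes outweigh the winner's payment (the payments $p_1=-3$, $p_2=-2$, $p_3=2$ check out against Eq.~(\ref{eq_vcg_pay}), giving revenue $-3<0$). The paper's version differs only in that it uses zero valuations for the intermediate buyers and parameterizes the line length $l$ to show the deficit $-(l-1)$ grows unboundedly, whereas your closed-form revenue expression $\sum_i W({\bf a}'_{-d_i}) - (k-1)W({\bf a}')$ is a correct and pleasant bonus but not needed for the counterexample.
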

\begin{proof}
%{\color{red}[Change $L$ to $l$]}
We prove this by example. Figure~{\ref{line_graph}} shows a network where the seller $s$ and $l$ buyers form a line and the seller is located at one of the end points. Here, all buyers, except the one located at the other end of the line, have a valuation zero, and the only non-zero valuation is $1$. Applying the VCG on this setting, the buyer with valuation $1$ receives the item and her payment is zero. All the other buyers do not receive the item, but each of them will receive $1$ according to the payment policy, because without any of them, the buyer with valuation $1$ will not be able to participate in the auction. Thus, the revenue of the seller is $-(l-1)$.
% is the only one who has a positive valuation $1$. Under the VCG mechanism, the auctioneer will receive a payment $0$ from agent $L$ but pays $1$ to the other $L-1$ players. So the deficit is $L-1$. Therefore VCG in social network is not necessarily budget balanced.
%
%is a special social network which is  a line with a length $L \ge 2$ from auctioneer to the player with maximum valuation $V_{N_{-s}}^{*}$, shown in Figure {\ref{line_graph}}. In this case, if all the players' valuation except the winner is $0$, then according to the VCG mechanism, the auctioneer may get a payment of $0$ from the winner and pays a payment of $V_{N_{-s}}^{*}$ to $(L-1)$ player except the winner. So the revenue of auctioneer is  $(1 - L )\cdot {V_{N_{-s}}^{*}}$ which is negative.
\end{proof}

\begin{figure}[h]
  \centering
  \includegraphics[width=2.3in]{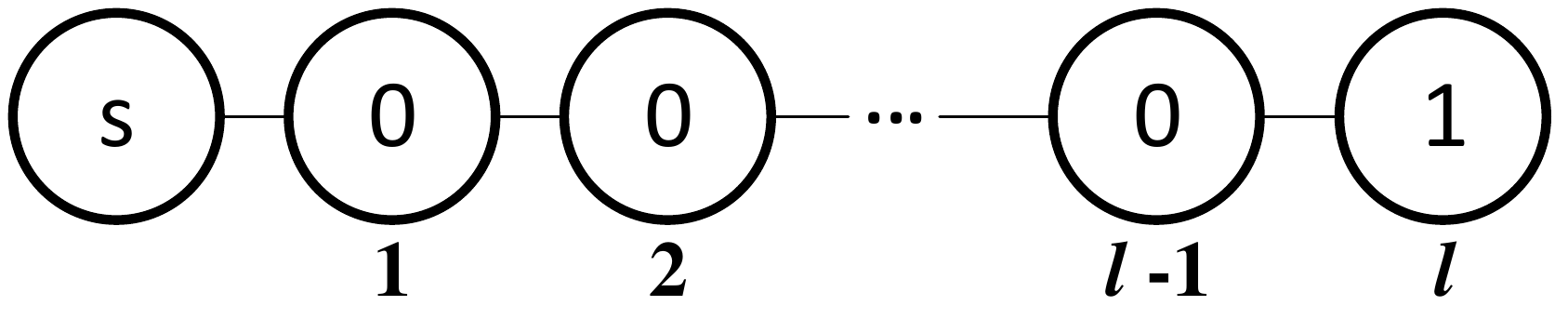}
  \caption{A line-structured social network for the proof of Proposition~\ref{prop_vcg}.}\label{line_graph}
\end{figure}

By contrast, for the same setting given in Figure~\ref{line_graph}, if the seller applies the second price auction (without diffusion incentive), then only one buyer near the seller participates in the auction, and the seller's revenue is zero. Therefore, the seller would rather prefer the second price auction to the VCG mechanism with information diffusion with respect to her revenue. The reason for the seller to incentivize buyers to spread the auction information is to obtain a higher revenue, but the VCG mechanism cannot achieve this. Hence, in the next section, we propose a novel solution which will not only incentivize buyers to spread the information, but also increase the seller's revenue.
% As we can see from the analysis of VCG, the auctioneer may have a risk with deficit. In real application, this may be a very severe problem because the designer of mechanism is also individually rational. In next section we design a mechanism which eliminates deficit.

\section{Information Diffusion Mechanism}
In this section, to conquer the revenue issue of the VCG mechanism in social networks, we design a mechanism that is incentive compatible and also increases the seller's revenue (weakly budget balanced). Based on the notion of diffusion critical node, we first give another definition of \emph{diffusion critical sequence}, which is a key element of our mechanism.

\begin{defn}\label{Diffusion_Critical_Sequence}
Given the buyers' action profile ${\bf a}^\prime$, for each $i\in N_{-s}$, we define $C_i=\{x_1,x_2,\cdots,x_k,i\}$ the \emph{diffusion critical sequence} of $i$, which is an ordered set of all diffusion critical nodes of $i$ and $i$ itself and the order is determined by the relation $d_{x_1} \supset d_{x_2} \supset ,\cdots, \supset d_{x_k}\supset d_{i} $.
\end{defn}

%\begin{prop}\label{H}
%Denote the set of buyer $k$ and all the diffusion critical nodes of $k$ as $\{k,x_1,x_2,\cdots\}$. Then there is a unique permutation of this set denoted as $H(k) = \{x'_1,x'_2,\cdots,k\}$, where $k$ is last member such that $d_{x'_1} \supset d_{x'_2} \supset ,\cdots, \supset d_k$.
%\end{prop}
%\begin{defn}\label{H}
%We denote by an ordered set $K_i=\{\{x_1,x_2,\cdots,x_k\},i\}$
%For each buyer $i$, there exists a unique ordered set, called the diffusion critical path of buyer $i$, denoted by $K_i=\{\{x_1,x_2,\cdots,x_k\},i\}$, such that each $x_j$ is a diffusion critical node of $x_{j+1}$ and $d_{x_1} \supset d_{x_2} \supset ,\cdots, \supset d_{x_k} \supset d_i$.
%\end{defn}
The diffusion critical sequence captures how the auction information from the seller is propagated to buyer $i$ through its diffusion critical nodes. It is essentially a partial route of all the information diffusion paths from the seller to buyer $i$, which clearly shows to what extend each node in the sequence affects the information diffusion process.
A node in the sequence can only receive the auction information if all nodes ordered above the node have received the information and propagated to the next node in the sequence.
%A node in the ordered sequence with smaller index $i$ will receive the auction information earlier than and it has higher influence than all the nodes with larger indexes in the diffusion process.
It is not difficult to verify that, after the diffusion process, there exists one and only one diffusion critical sequence for each buyer participated in the auction. %The diffusion critical sequence is the key component for designing the allocation policy and payment policy of our mechanism.

%It's obvious that the relation between buyer $i$'s diffusion critical nodes is not independent, therefore diffusion critical path actually explains their relation in terms of information diffusion. Intuitively,
%
%%the diffusion critical path of buyer $i$ denotes a unique information diffusion 'path' from the seller $s$, which can be viewed as the first 'diffusion critical node' of any buyer, to buyer $i$.
%before an information arriving to buyer $i$, this information must be successively transmitted to the buyers in this path. Following we give an example to show that.

%According to ${\bf r}_i$ from each buyer $i$, a \emph{diffusion network} ${\mathcal G}=(N',E')$ can be established, where $N'$ is the set of vertices identifying the participating buyers and $E'$ is the set of edges characterizing the information diffusion. Note that for any mechanism that can successfully implements G-IC, the constructed diffusion network $\mathcal{G}$ is essentially identical to the real social network, and $N'=N$ in this circumstance.

%In the information diffusion process from seller $s$ to a buyer $i$, $C_i$ can be described as a path that has to be passed by.
\begin{figure}[h]
  \centering
  \includegraphics[width=3.1in]{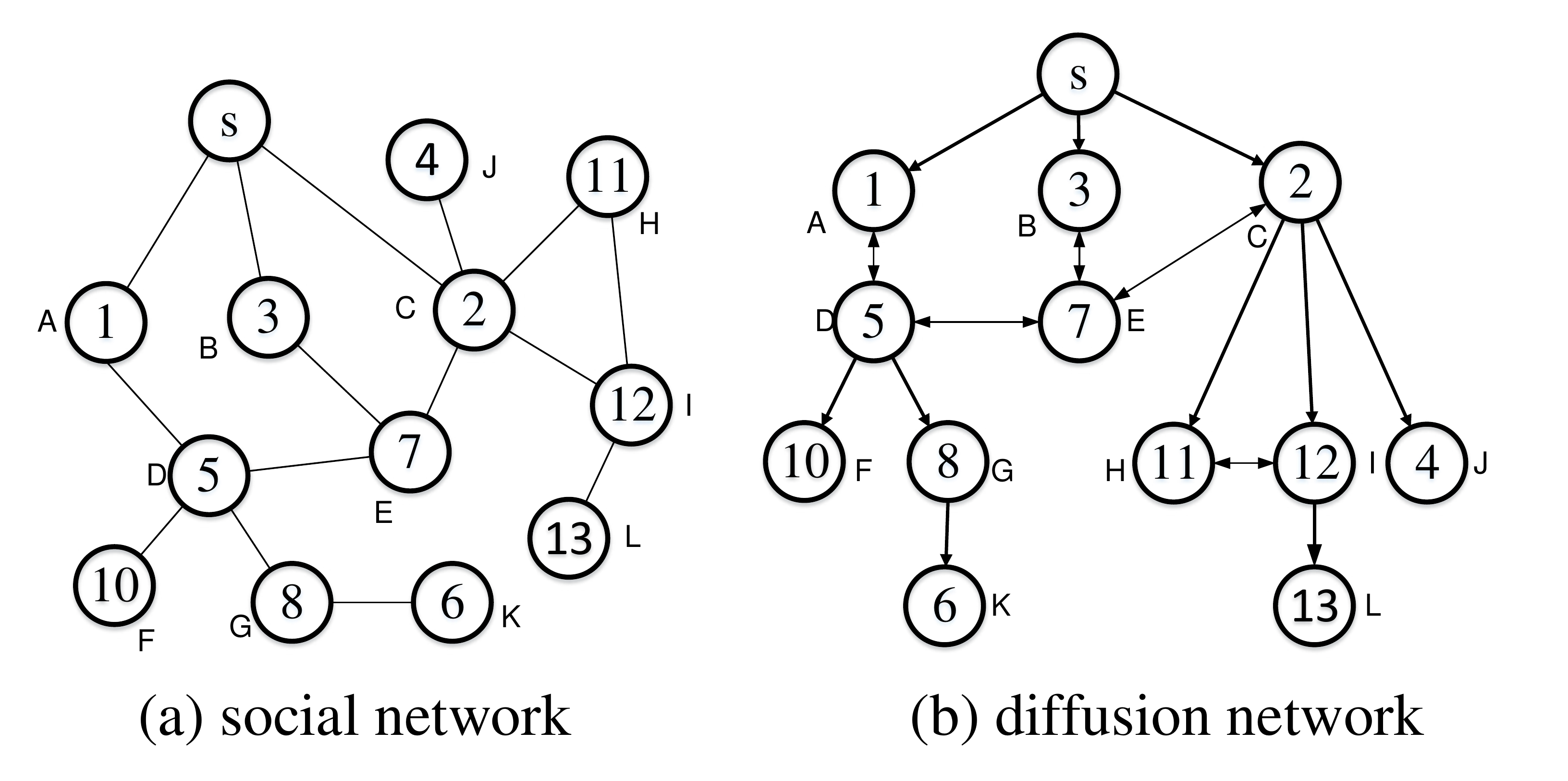}
  \caption{A social network example and the corresponding information diffusion flow.\label{information_graph}}
\end{figure}

%Based on Proposition {\ref{H}}, we can do more concrete analysis of the social network, according to which we will design our allocation policy and payment policy.

To further demonstrate the property of a buyer's diffusion critical sequence, we study an example given in Figure~\ref{information_graph}. Figure 2(a) shows a simple social network, where each circle denotes an agent and the value (except $s$) inside each circle is the agent's valuation. The agent with value $s$ denotes the seller and all the potential buyers are labelled by $A, B, \cdots, L$. The links between circles represent the neighborhood relationship. Figure 2(b) shows the corresponding directed graph of the information flow, where the directed links indicate the information flow directions. If two buyers are linked by an one-way arrow edge, it means that the auction information can only flow in the direction of the arrow. For example, the information can flow from buyer $D$ to buyer $G$, but not the other way around. This is because $G$ cannot receive the information from its other neighbors other than $D$. Here the diffusion critical sequence of buyer $G$ is $C_G=\{D,G\}$. Note that buyers $A,B,C,E$ are not $G$'s diffusion critical nodes, although they are in the diffusion paths from the seller to $G$, because without any of them $G$ can still receive the information from the other paths. It is also easy to check that $d_D=\{D,F,G,K\} \supset d_G=\{G,K\}$.

Now we propose our mechanism based on the diffusion critical sequence.
%Let $v_S^*=v^*({\bf a}',S)$ denote the highest bid from an arbitrary set of buyers $S$ when their action profile ${\bf a}'$ is given. Let $m$ denote the buyer with the highest bid among all the buyers who ever received the auction information. Then $m$'s diffusion critical sequence is $C_m = \{1,2,\cdots, i-1, i,i+1,\cdots,m-1,m\}$. Then the mechanism is defined as follows.

\begin{defn}[\textbf{Information Diffusion Mechanism} (IDM)]
Given the buyers' action profile ${\bf a}'$, for any $B\subseteq N_{-s}$, let $v_B^*= \max_{i\in B, a_i^\prime \neq null} v_i^\prime$ denote the highest valuation report among buyers $B$, let $m$ denote the buyer with the highest valuation report in ${\bf a}'$ (with random tie-breaking), i.e., $v'_m = v_{N_{-s}}^*$. To simplify the representation, let $C_m = \{1,2,\cdots, i-1, i,i+1,\cdots,m-1,m\}$ be $m$'s diffusion critical sequence.

The allocation policy of the \emph{information diffusion mechanism} is defined as:
\begin{equation}\label{IDM_allocation_rule}
\pi_i^{idm} ({\bf a'}) =
\begin{cases}
 1 & \text{if $i \in C_m \setminus\{m\}$ and $v'_{i}=v_{{{-d}_{{i+1}}}}^{*}$,}\\
 1 & \text{if $i = m$,}\\
 0 & \text{otherwise.} \\
 \end{cases}
 \end{equation}
If there exist multiple buyers $i$ with $\pi_i^{idm}({\bf a'})=1$, allocate the item to the buyer with minimum index $i$ in $C_m$. Assume that under the allocation policy, buyer $w\in C_m$ wins the item, the payment policy is defined as:
%any buyer $i\in C_w$ pays ${v_{{{-d}_i}}^{*}} $ to buyer $i-1$, while any buyer $i\notin C_w$ has no payment.
%Because the non-zero payments are only made by the buyers in $C_w$ which actually are the on-path buyers from the seller to buyer $w$, therefore we say such path is critical path.
%Based on this rule, we then formally define the \emph{net payment} from any buyer $i$ to the seller as:
\begin{equation}\label{IDM_payment_rule}
p_i^{idm} ({\bf a'}) =
\begin{cases}
 v_{{{-d}_{i}}}^{*}  - v_{{{-d}_{i+1}}}^{*} & \text{if $i \in C_w\setminus\{w\}$,}\\
 v_{{{-d}_{i}}}^{*} & \text{if $i=w$,} \\
 0 & \text{otherwise.} %\\
 %0 & \text{else.}
 \end{cases}
 \end{equation}

\end{defn}

%One can see that the diffusion critical sequence $C_m$ is a key component for determining the allocation and payment.
Intuitively, IDM allocates the item to the first buyer $i$ in the diffusion critical sequence $C_m$ whose bid is the highest bid when diffusion critical node $i+1$ does not participate in the auction. %If there are multiple such buyers, the mechanism assigns the item to the one who is closest to the seller.
The winner $w$ pays the highest bid without her participation. Each buyer in $C_w\setminus \{w\}$ (winner's diffusion critical node) is rewarded the payment increase (not social welfare increase) due to her diffusion action. In particular, if $i\in C_w\setminus \{w\}$ keeps the item by herself, she will pay something say $x$, while if she gives it to $i+1$ and $i+1$ keeps it, $i+1$ will pay a different amount say $y$, the difference $y-x$ is rewarded to $i$.

Before analyzing the properties of IDM, we show a running example of IDM by using Figure 2(b). In this setting, assume that all buyers truthfully report their valuations and diffuse the information to all their neighbors. It is clear that $L$ is the buyer with the highest valuation and the diffusion critical sequence of $L$ is $C_L=\{C,I,L\}$. According to the allocation policy, buyer $I$ wins the auction because $v_I = 12$ is the highest valuation when $L$ does not participate in the auction. Finally, according to the payment policy, buyer $I$ should pay ${v_{{{-d}_I}}^{*}} = 11$, and buyer $C$ is rewarded $1 = 11-10$.

In what follows, we study the properties of IDM. Following the definition of the mechanism, for a given action profile, we classify all buyers into four different status:
%Now we begin to analyze the features of IDM. To study the incentive of the buyers in the auction, based on Definition \ref{Diffusion_Critical_Sequence}, we first identify the \emph{status} of each buyer. Each buyer $i$ have one and only one of the following four status.
\begin{enumerate}
\item \emph{The winner}: $w$. %$i$ is the winner, i.e., $i=w$.
\item \emph{On-path buyers}: all $w$'s diffusion critical nodes, i.e., $C_w\setminus\{w\}$.
\item \emph{Unlucky buyers}: if $w \ne m$, then all $d_{w+1}$ are unlucky buyers, otherwise, all $d_m \setminus \{m\}$ are unlucky buyers.
\item \emph{Normal buyers}: all buyers who are not classified in any of the other three status. %in the other positions are called normal buyers.
\end{enumerate}
According to the payment policy of IDM, once the diffusion network has been established and $C_w$ has been identified, only the buyers in $C_w$ (i.e., the winner and the on-path buyers) are involved with non-zero money transfer. Specifically, for an unlucky buyer $i$, when $w \ne m$, we have $i \in d_{w+1}$ and action $a_i^\prime$ cannot change $v^*_{-d_{w+1}}$. Therefore, $v'_w = v^*_{-d_{w+1}}$ still holds no matter what $a_i^\prime$ is, i.e., buyer $w$ is still the winner and buyer $i$ is still an unlucky buyer for any $a_i^\prime$. Similarly we can conclude the same when $w=m$. In other words, the utility of any unlucky buyer is always zero no matter what action she takes.

Given the above classifications of buyers, we will prove that IDM is incentive compatible and individually rational. To do so, we first prove the following two lemmas.
%Based on the statuses of buyers, the following lemmas are the main technical stepstones for verifying the Individual-Rationality and Incentive-Compatibility of IDM.
Lemma \ref{not_depend_valuation} shows that under IDM a buyer's payment is independent of her valuation report.

\begin{lemma}\label{not_depend_valuation}
For all ${\bf a}^\prime \in A$, all $i\in N_{-s}$, $p_i^{idm}({\bf a'})$ is independent of $v_i^\prime$.
%Given $r_i'$ for all buyer $i$, $p_i^{idm}({\bf a'})$ is independent of $v_i'$.
\end{lemma}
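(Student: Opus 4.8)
The plan is to fix the buyer $i$ together with her report set $r_i'$ and the entire action profile ${\bf a}'_{-i}$ of the other buyers, and then track how $p_i^{idm}({\bf a}')$ behaves as the single coordinate $v_i'$ is varied. The first observation is that every diffusion-theoretic object appearing in the payment rule---the sets $d_j$ and $s_j$, the nesting order of the critical sequences, and hence $C_m$ and $C_w$ as \emph{sets}---is determined solely by the report sets $(r_j')_j$, which are held fixed. Consequently the only channels through which $v_i'$ could enter $p_i^{idm}$ are (i) the critical values $v_{-d_j}^{*}$ that occur in the formula and (ii) the \emph{identity} of the highest bidder $m$ and of the winner $w$, i.e. which branch of \eqref{IDM_payment_rule} is selected. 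I would therefore argue branch by branch, using the four status classes (winner, on-path, unlucky, normal) introduced just above.

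The easy classes come first. If $i$ is a \emph{normal} or an \emph{unlucky} buyer, then $p_i^{idm}({\bf a}')=0$; moreover it was already noted that an unlucky buyer's status cannot be altered by her own action, so the value $0$ is returned for every $v_i'$. If $i$ is the \emph{winner} $w$, then $p_i^{idm}({\bf a}')=v_{-d_i}^{*}$, and since $d_i=s_i\cup\{i\}\ni i$ we have $i\notin -d_i$, so the maximum $v_{-d_i}^{*}=\max_{j\in -d_i,\,a_j'\neq null}v_j'$ ranges only over reports other than $i$'s and hence does not involve $v_i'$. This disposes of the winner and the two trivial classes.

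The only delicate class is the \emph{on-path} case $i\in C_w\setminus\{w\}$, where $p_i^{idm}({\bf a}')=v_{-d_i}^{*}-v_{-d_{i+1}}^{*}$. The first term is handled exactly as above. For the second term the subtlety is that $v_i'$ genuinely \emph{is} a candidate in the maximum: because $i+1$ follows $i$ in the critical sequence we have $d_{i+1}\subset d_i$ with $i\in d_i\setminus d_{i+1}$, whence $i\in -d_{i+1}$, and a priori $v_i'$ could attain $v_{-d_{i+1}}^{*}$. The key step is to rule this out from the fact that the on-path branch is active. Being on-path forces $i<w$ in $C_m$, and since $w$ is the minimum-index buyer satisfying the allocation condition \eqref{IDM_allocation_rule}, the condition fails at $i$, i.e. $v_i'\neq v_{-d_{i+1}}^{*}$; combined with the trivial inequality $v_i'\le v_{-d_{i+1}}^{*}$ (valid because $i\in -d_{i+1}$ participates) this yields $v_i'< v_{-d_{i+1}}^{*}$. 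Hence the maximum defining $v_{-d_{i+1}}^{*}$ is attained by some buyer other than $i$ and is a function of ${\bf a}'_{-i}$ alone. I expect this step---showing that $i$'s own report is strictly dominated inside $v_{-d_{i+1}}^{*}$ precisely when the on-path branch applies---to be the main obstacle, since it is the one place where the allocation rule must be invoked to guarantee that $v_i'$ never becomes the governing bid. Assembling the four cases then shows that in whichever branch of \eqref{IDM_payment_rule} is active the value of $p_i^{idm}({\bf a}')$ is pinned down by the other buyers' reports and by $r_i'$, which is the asserted independence of $v_i'$.
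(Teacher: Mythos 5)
Your proof is correct and takes essentially the same route as the paper's: a case split over the four buyer statuses, with the winner's payment $v_{-d_i}^{*}$ independent of $v_i'$ because $i\notin -d_i$, zero payments for normal and unlucky buyers, and the on-path case settled by invoking the allocation rule to conclude $v_i'\neq v_{-d_{i+1}}^{*}$. Your write-up merely makes explicit a step the paper compresses---namely that $i\in -d_{i+1}$ makes $v_i'$ a candidate in the maximum, so one needs $v_i' < v_{-d_{i+1}}^{*}$ to conclude the maximum is attained by, and determined by, the other buyers' reports.
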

\begin{proof} Firstly, if $i$ is an unlucky buyer or a normal buyer (i.e., $i \in N_{-s}\setminus\{C_w\}$), her utility is zero which is independent of $v_i^\prime$. Secondly, if $i$ is an on-path buyer (i.e., $i \in C_w \setminus \{w\}$), her payment is $v_{{{-d}_{i}}}^{*}  - v_{{{-d}_{i+1}}}^{*}$. Since $i \notin {{{-d}_{i}}}$, $v_{{{-d}_{i}}}^{*}$ does not depend on $v_i'$. Since she is not the winner, we have $v_{{{-d}_{i+1}}}^{*}\ne v'_i$. Therefore, the payment of an on-path buyer is independent of $v_i^\prime$. Finally, if $i$ is the winner, her payment is $v_{{{-d}_{i}}}^{*}$. Since $i\notin -d_i$, $v_{{{-d}_{i}}}^{*}$ does not depend on $v_i'$. %Therefore, as long as $r_i'$ for all $i\in N_{-s}$ are determined, any buyer $i$'s payment is independent of her bid.
\end{proof}

Lemma \ref{change_ri} shows that given an action profile, if a buyer cannot gain any positive utility when she diffuses the information to all her neighbors, she will not gain any positive utility by reducing her diffusion effort.

%One can see that a buyer's diffusion action can affect the information diffusion process. Following we prove that for buyer who does not appear in $K_w$, such state remains unchanged whatever the diffusion action she takes as long as her reported valuation is fixed.
\begin{lemma}\label{change_ri}
%When all buyers' bids are fixed, if a buyer $i$ diffusing with her full effort $r_i' = r_i$ is identified as normal buyer or unlucky buyer, then she cannot make herself appear in $C_w$ by decreasing her diffusion effort to $r_i' \neq r_i$.
%When all buyers' bids are fixed, if a buyer $i$ with diffusing action $r_i' = r_i$ is identified as a normal or unlucky buyer, then she can never change her status by decreasing her diffusion effort to $r_i' \neq r_i$.
Given the buyers' action profile ${\bf a}^\prime$, if $i\notin C_w$ under $r_i' = r_i$, then $i\notin C_w$ under any $r_i' \neq r_i$.

%if buyer $i$ cannot appear in set $K_w$ by taking diffusion action $r_i' =r_i$, she can still not appear in $K_w$ by taking $r_i' \neq r_i$.
\end{lemma}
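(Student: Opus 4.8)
The plan is to track the winner and her diffusion critical sequence as $i$ shrinks her reported neighbour set from $r_i$ to some $r_i' \neq r_i$ (so $r_i' \subseteq r_i$). Write $\hat m$ and $\hat w$ for the highest bidder and the winner under $r_i'$. Since the winner's sequence is always a prefix of the highest bidder's sequence (established below), if $i \notin C_{\hat m}$ then $i \notin C_{\hat w}$ and we are done; so the whole argument reduces to the case $i \in C_{\hat m}$. The key first observation is that shrinking $r_i'$ only cuts edges out of $i$, so it can affect only buyers downstream of $i$ (those in $s_i$). Two invariances follow: (a) the set of diffusion critical nodes of $i$, together with their order, is unchanged, since it is determined by the paths into $i$, which never use an edge out of $i$; and (b) for every node $j$ equal to $i$ or a critical node of $i$, the value $v_{-d_j}^*$ is unchanged, because by Lemma~\ref{membership_di} the set $d_j$ can only shrink and, as removing edges creates no detour around $i$, every buyer leaving $d_j$ becomes inactive and so never contributed to the maximum over $-d_j$. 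For the same ``no detour'' reason, if $i \in C_{\hat m}$ under $r_i'$ then $i \in C_{\hat m}$ already under full diffusion, with $\hat m$ active there too.

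Next I would pin down the order structure. $C_w$ is exactly the prefix of $C_m$ up to and including $w$: if $x$ precedes $w$ in $C_m$ then $d_x \supset d_w$, so $w \in d_x$ and $x$ is a critical node of $w$; conversely any critical node of $w$ is a critical node of $m$ lying above $w$. The same argument shows that for any active bidder $x$ with $i \in C_x$, the part of $C_x$ strictly above $i$ equals the set of critical nodes of $i$, which by (a) is independent of $r_i'$. In particular the nodes strictly above $i$ in $C_{\hat m}$ are precisely the critical nodes of $i$, and by (b) each such node $j$ carries the same winning test $v_j' = v_{-d_{j+1}}^*$ as under full diffusion, since its successor satisfies $j+1 \preceq i$ and therefore lies in this invariant prefix.

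With these in hand the substantive case is $i \in C_m$ under full diffusion. Because $i \notin C_w$ and $C_w$ is a prefix of $C_m$, the winner $w$ lies strictly above $i$; hence $w$ is a critical node of $i$ and passes its winning test. As $i \in C_{\hat m}$, the node $w$ also lies strictly above $i$ in $C_{\hat m}$, and by the previous paragraph $w$ passes exactly the same test under $r_i'$. Thus some node strictly above $i$ satisfies the winning condition, which forces $\hat w$ strictly above $i$ and yields $i \notin C_{\hat w}$, as required.

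The remaining case, $i \notin C_m$ under full diffusion, is where I expect the real difficulty. Here $m \in -d_i$, so the $s$-to-$m$ path avoiding $i$ survives the cut and $m$ stays active with the globally maximal bid; together with $i \in C_{\hat m}$ (so $\hat m \neq m$) this forces a tie $v_{\hat m}' = v_m'$ at the top value. Under distinct valuations this case cannot occur and the lemma follows. When top-value ties are allowed, one must instead argue that routing the item into $i$'s branch gives $i$ nothing, using that $v_{-d_i}^* = v_m'$ already equals the maximum so that any winner inside $d_i$ leaves $i$ a zero transfer. Pinning down the random tie-break for $\hat m$ and the minimum-index rule for $\hat w$ in this degenerate situation is the main obstacle; it is bookkeeping rather than a new idea.
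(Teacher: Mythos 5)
Your route is genuinely different from the paper's and considerably more careful. The paper disposes of the lemma in three lines by its status classification: an unlucky buyer cannot change $v^*_{-d_{w+1}}$ (so $w$ keeps winning), and for a normal buyer it asserts $C_m \subseteq -d_i$ and concludes that ``$C_w$ is independent of $r_i'$.'' Your decomposition --- the prefix property ($C_w$ is the initial segment of $C_m$ up to $w$), the invariance of $i$'s critical nodes and of the winning tests $v'_j = v^*_{-d_{j+1}}$ for $j$ above $i$, all via the observation that cutting edges out of $i$ destroys only paths through $i$ and creates none --- is a rigorous reconstruction of what the paper leaves implicit, and in the case $i \in C_m$ it proves the sharper fact that the first passer above $i$ is unchanged, so $\hat w$ stays strictly above $i$. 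Your argument even repairs an overclaim in the paper: for a normal buyer, cutting edges can deactivate high bidders in $s_i$, which can change the test values $v^*_{-d_{j+1}}$, move $w$ within the sequence, and even enlarge $C_m$ (a node can become critical for $m$ once its detour through $i$ is destroyed); so $C_w$ is \emph{not} literally independent of $r_i'$. What is true, and what your ``no detour'' reasoning delivers, is only that $i$ herself can never enter $C_{\hat m} \supseteq C_{\hat w}$, which suffices.

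Two blemishes. First, your invariance (b) cites Lemma~\ref{membership_di} for ``$d_j$ can only shrink,'' but that lemma concerns $d_i$ under $i$'s \emph{own} diffusion change and does not apply to $j \neq i$; in general $d_j$ can \emph{grow} when $i$ cuts edges. For $j \in C_i$ the conclusion nevertheless holds by your own no-detour argument stated directly: any destroyed path passes through $i$ and hence through $j$, so no buyer in $-d_j$ loses its $j$-avoiding path or migrates into $d_j$, and every deactivated buyer lay in $s_i \subseteq d_j$; thus the active part of $-d_j$, and so $v^*_{-d_j}$, is untouched. Second, the tie case you flag is not mere bookkeeping: under the paper's \emph{random} tie-breaking the statement is literally false. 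Take two top bidders $m \notin d_i$ and $j \in s_i$ with $j$ still reachable after the cut; if the pre-cut draw selects $m$ (so $i \notin C_w$) while the post-cut draw selects $j$, then $i \in C_{\hat w}$. Your sketched patch (any winner inside $d_i$ leaves $i$ a zero transfer) rescues the lemma's downstream use in the incentive-compatibility proof but not the set-membership claim itself; the clean fix is a fixed deterministic tie-breaking priority, under which $\hat m = m$ whenever $m$ stays active and your case analysis closes completely. The paper's proof silently ignores the same degeneracy, so on this point you have found, not missed, a gap.
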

\begin{proof}
If a buyer $i$ is not in $C_w$, then she can only be an unlucky buyer or a normal buyer.
If $i$ is an unlucky buyer, we already know that she is still an unlucky buyer no matter what $r_i'$ is. If $i$ is a normal buyer, we have $C_m \subseteq {{-d}_i}$, i.e., $C_w$ (subset of $C_m$) is independent of $r_i^\prime$. %Since whatever $r_i'$ is, ${{-d}_i}$ does not change, then its subset $C_m$ is also independent of $r_i'$.
Therefore, a normal buyer cannot change her status by changing $r_i'$.
\end{proof}

%In Lemma \ref{not_depend_valuation}, we show that under IDM, after the information diffusion process is finished, a buyer's payment is independent of her bid. In Lemma \ref{change_ri}, we show that when all the bids are determined, if a buyer cannot gain any positive utility when she diffuses the information to all her neighbors, she can still not gain any positive utility if she reduces her diffusion effort.

Given the above two lemmas, we are ready to show that IDM is individually rational and incentive compatible.
%These two lemmas are important since now we know that, when one dimension of the buyers' actions are determined, whether the IDM's allocation and payments are correlated with the other dimension. Based on the above two lemmas, the following theorem concludes that IDM is individually rational (IR) and incentive compatible (IC).

\begin{theorem}\label{IDM_IC}
The information diffusion mechanism is individually rational and incentive compatible.
\end{theorem}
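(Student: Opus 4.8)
The plan is to prove individual rationality as a short warm-up and then devote the bulk of the argument to incentive compatibility, which I would decompose into a valuation step and a diffusion step exactly as was done for the VCG mechanism.

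For individual rationality I would fix a buyer $i$ reporting truthfully ($v_i' = v_i$, with arbitrary $r_i'$ and arbitrary actions of the others) and check her utility in each of the four statuses. A normal or unlucky buyer is not allocated and pays nothing, so her utility is $0$. An on-path buyer $i \in C_w \setminus \{w\}$ is not allocated and is charged $v_{-d_i}^* - v_{-d_{i+1}}^*$, which is $\le 0$ because $d_i \supset d_{i+1}$ gives $-d_i \subset -d_{i+1}$ and hence $v_{-d_i}^* \le v_{-d_{i+1}}^*$; thus her utility is the non-negative reward $v_{-d_{i+1}}^* - v_{-d_i}^*$. Finally the winner $w$ pays $v_{-d_w}^*$ and obtains $v_w - v_{-d_w}^*$, which is non-negative because her qualifying bid satisfies $v_w = v_w' = v_{-d_{w+1}}^* \ge v_{-d_w}^*$ (or $v_w' = v_{N_{-s}}^* \ge v_{-d_w}^*$ when $w = m$). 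This settles IR.

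For incentive compatibility I would show two things: (Step A) with $r_i'$ fixed, truthfully reporting $v_i$ maximizes $i$'s utility; and (Step B) with $v_i'$ fixed, diffusing to all of $r_i$ maximizes $i$'s utility. The backbone of Step A is Lemma~\ref{not_depend_valuation}: since $i$'s payment never depends on $v_i'$, her problem reduces to choosing among a menu of \emph{(allocation, payment)} pairs. I would observe that the two threshold quantities $v_{-d_i}^*$ and $v_{-d_{i+1}}^*$ are computed over buyer sets excluding $i$'s subtree, hence are independent of $v_i'$; consequently both the winner-payment $v_{-d_i}^*$ and the on-path reward $v_{-d_{i+1}}^* - v_{-d_i}^*$ are constants from $i$'s viewpoint. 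The goal is then to show that the best utility buyer $i$ can obtain over all reports equals $\max\{\,v_i - v_{-d_i}^*,\ v_{-d_{i+1}}^* - v_{-d_i}^*,\ 0\,\}$ and that truthful reporting attains it, which is the standard single-parameter threshold argument.

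The main obstacle is precisely establishing this threshold structure inside Step A, because changing $v_i'$ can change the identity of the global maximizer $m$ and therefore restructure the entire critical sequence $C_m$, moving $i$ between the winner, on-path, unlucky, and normal statuses. I would resolve this by case-splitting on whether the highest competing bid lies inside $i$'s subtree $d_i \setminus \{i\}$ or outside it in $-d_i$: in the first case $i \in C_m$ for every report and she is either the first qualifying node on the path (winner, paying $v_{-d_i}^*$) or sits above the winner (on-path, earning the fixed reward); in the second case a sufficiently low report makes $i$ a normal buyer with utility $0$ while a sufficiently high report makes $i = m$ and the winner. In each branch the relevant thresholds are report-independent, so the achievable utility collapses to the displayed maximum and is attained truthfully. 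Step B is lighter: if $i \notin C_w$ under full diffusion then Lemma~\ref{change_ri} gives $i \notin C_w$ under every $r_i'$, so her utility is identically zero and full diffusion is weakly optimal; and if $i \in C_w$ under full diffusion, I would invoke Lemma~\ref{membership_di}, so that shrinking $r_i'$ shrinks $d_i$ and hence enlarges $-d_i$, which can only raise the threshold $v_{-d_i}^*$ or sever the high-value bidders below her, so reducing diffusion cannot increase her utility. Combining Steps A and B shows that reporting truthfully and diffusing to all neighbors is a dominant strategy.
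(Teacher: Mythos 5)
Your individual-rationality argument and your overall skeleton (Lemma~\ref{not_depend_valuation} for the report step, Lemmas~\ref{change_ri} and~\ref{membership_di} for the diffusion step) match the paper. But Step~A contains a genuine gap: the claimed menu characterization --- that buyer $i$'s best achievable utility over all reports is $\max\{v_i - v_{-d_i}^{*},\; v_{-d_{i+1}}^{*} - v_{-d_i}^{*},\; 0\}$ --- is false, and the failure occurs exactly in the subcase your case split omits. In your case (a), where the highest competing bid lies in $d_i\setminus\{i\}$, you assert that $i$ is ``either the first qualifying node on the path (winner) or sits above the winner (on-path).'' She may instead sit \emph{below} the winner, i.e., be an unlucky buyer. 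Concretely, take the line $s$--$A$--$i$--$m$ with truthful bids $v_A=10$, $v_i=11$, $v_m=12$. Then $C_m=\{A,i,m\}$ and $-d_i=\{A\}$, so $A$'s qualification test $v_A'=v_{-d_i}^{*}$ holds trivially ($A$ is the only bidder in $-d_i$) and does not involve $v_i'$ at all; hence $A$ wins under \emph{every} report of $i$, including reports above $12$ that make $i=m$. Buyer $i$'s achievable utility is therefore identically $0$, yet your formula predicts she can secure $v_i - v_{-d_i}^{*} = 1 > 0$ by winning. The quantity $v_i - v_{-d_i}^{*}$ belongs to the menu only when no ancestor of $i$ in the critical sequence qualifies first, and that event is determined by tests independent of $v_i'$ --- which is why the paper does not attempt a threshold characterization but instead case-splits on $i$'s status under the truthful profile (winner, on-path, unlucky, normal), using the observation stated before its lemmas that an unlucky buyer's utility is zero no matter what action she takes. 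Your argument can be repaired by adding exactly this fourth branch, but as written the central claim of Step~A does not hold.

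Two smaller points. First, in Step~B your statement that shrinking $r_i'$ ``enlarges $-d_i$, which can only raise the threshold $v_{-d_i}^{*}$'' misdescribes the dynamics: buyers who leave $d_i$ when $i$ withholds information become $null$ rather than active members of $-d_i$, so $v_{-d_i}^{*}$ is \emph{unchanged} (the paper's winner case relies on precisely this invariance); your inequality survives, but for the wrong reason. Second, for the on-path buyer in Step~B the operative loss is in the reward term: the paper shows $v_{-d_{(i+1)'}}^{*} \le v_{-d_{i+1}}^{*}$ for the new successor $(i+1)'$, which is the precise version of your ``sever the high-value bidders below her'' and should be spelled out, since it is where the reward actually shrinks.
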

\begin{proof}
Assume that buyer $i$ reports her valuation truthfully. If she is a normal or an unlucky buyer, her payment is $0$ according to the payment policy. If she is an on-path buyer, her utility is $u_i(a_i,{\bf a'},(\pi^{idm},p^{idm})) = v_{{{-d}_{i+1}}}^{*}  - v_{{{-d}_{i}}}^{*}$. Since $d_{i} \supset d_{i+1}$, we have ${{{-d}_{i}}} \subset {{{-d}_{i+1}}}$ which leads to the fact that $v_{{{-d}_{i+1}}}^{*} \ge v_{{{-d}_{i}}}^{*}$. Therefore for an on-path buyer, $u_i(a_i,{\bf a'},(\pi^{idm},p^{idm}))\ge 0$. If she is the winner, then her valuation $v_i $ should be identical to the highest bid from buyers set ${- d}_{i+1}$, which is denoted as $v_{{{- d}_{i+1}}}^{*} $. Her payment is $v_{{{-d}_{i}}}^{*}$. Then her utility is $v_{{{- d}_{i+1}}}^{*} -v_{{{-d}_{i}}}^{*}$, which is non-negative. Therefore, for an arbitrary buyer, when she truthfully reports her valuation, her utility is nonnegative, IDM is individually rational.

%According to the payment policy, the seller gets a non-negative revenue which is equal to $v_{{{-d}_{1}}}^{*}$. Therefore IDM is budget balanced as well.

%In a similar way of the prove of IC in VCG mechanism, we first fix the social network and show that $u_i (a_i,(v_i,r_i'), {\bf a'}_{-i}) \ge u_i (a_i,(v_i',r_i'), {\bf a'}_{-i})$. From Lemma {\ref{not_depend_valuation}}, the players' payment does not depend on his own bid but the final allocation; Therefore, in what follows we analyses how the utility changes when allocation have been changed.

Now we analyze the action of each kind of buyer in the social network, and prove that IDM is incentive compatible (IC). We first prove that, for all kinds of buyers, revealing the true valuation maximizes their utilities when their action $r_i'$ are determined. Based on Lemma {\ref{not_depend_valuation}}, we only need to consider what happens if a buyer $i$'s bid $v_i'$ can affect her status.

Case 1. If $i$ is an unlucky buyer,
%according to the definition of unlucky player, her bid can never affects $C_w$ and cannot change the allocation. Thus
her utility is always 0 whatever $v_i'$ she reports.

Case 2. If $i$ is a normal buyer with a truthful bid, it means that her valuation is no larger than the highest bid $v_m'$ in the network and she is outside of $C_m$, leading to $v_i\le v_m'$. As long as she bids with $v_i' \le v_m'$, $i$ cannot change $C_m$ and she will still be a normal buyer along side $C_m$, and her utility is always $0$. If $v_i' > v_m'$, she will change $C_m$. Under this circumstance, she will become the winner but now her utility decreases to $v_i-v_m'\le 0$. Therefore for normal buyers, $u_i(a_i,((v_i,r_i'),{\bf a'}_{-i}),(\pi^{idm},p^{idm})) \ge u_i(a_i,((v_i',r_i'),{\bf a'}_{-i}),(\pi^{idm},p^{idm}))$ for any $v_i'$.

Case 3. If $i$ is an on-path buyer with $v_i'=v_i$, then $i \in C_w\setminus\{w\}$. According to Eq.(\ref{IDM_allocation_rule}), we know $v_i<v_{{{-d}_{i+1}}}^{*}$. For any strategic bid $v_i'\ne v_i < v_{{{-d}_{i+1}}}^{*}$, she cannot change $C_w$ and she is still an on-path buyer. Her utility is still $v_{{{-d}_{i+1}}}^{*} - v_{{{-d}_{i}}}^{*}$. If she bids with strategic $v_i' \ge v_{{{-d}_{i+1}}}^{*}$, she will become the winner and her utility becomes $v_i - v_{{{-d}_{i}}}^{*}$. Since $v_i < v_{{{-d}_{i+1}}}^{*}$, then $u_i(a_i,((v_i,r_i'),{\bf a'}_{-i}),(\pi^{idm},p^{idm})) \ge u_i(a_i,((v_i',r_i'),{\bf a'}_{-i}),(\pi^{idm},p^{idm}))$ for any $v_i'$.

Case 4. If $i$ is the winner with truthful bid, any $v_i'\ge v_i$ will not change the allocation and her utility is $v_i - v_{{{-d}_{i}}}^{*}$. If her bid is $v_i'<v_i$, she may still be the winner with utility $v_i - v_{{{-d}_{i}}}^{*}$, or degenerate to an on-path buyer with utility ${v}_{{{-d}_{i+1}}}^{2}- v_{{{-d}_{i}}}^{*}$ where ${v}_{{{-d}_{i+1}}}^{2}$ is the second highest valuation in ${{{-d}_{i+1}}}$ and ${v}_{{{-d}_{i+1}}}^{2} < v_i$ (notice that $v_i$ is the original highest bid in $-d_{i+1}$), or she may degenerate to a normal buyer with utility $0$. In either case, her utility is not higher than truth-telling.

Therefore, we have $u_i (a_i,((v_i,r_i'), {\bf a'}_{-i}),(\pi^{idm},p^{idm})) \ge u_i (a_i,((v_i',r_i'), {\bf a'}_{-i}),(\pi^{idm},p^{idm})) $ for all $i$ and all $v_i'$.

%then $\pi_j(a_j,(v_j,r_j'),{\bf a'}_{-j}) = 0$ since $i\notin H()$; Because $i \ne C_m$, so only when he misreport a $v_j'$ which is greater than the origin highest valuation $v_{{{-d}_{j}}}^{*}$, is the allocatioccn changed and he becomes the winner according to the allocation policy. But at this moment his utility becomes $v_j - v_{{{-d}_{j}}}^{*}$. Note that $v_i \le v_{{{-d}_{j}}}^{*}$ which means the payoff is no more than 0; Therefore, the best strategy for ordinary players is submitting a bid which is no more than his true valuation--$v_j' \le v_j$.

%the only chance he changes the allocation is misreport his valuation to a lower one which makes him become the case 1 or 3 which the payoff is 0 and ${V'}_{{{-d}_{w+1}}}^{*}- v_{{{-d}_{w}}}^{*}$ respectively where ${V'}_{{{- d}_{w+1}}}^{*}$ is the new highest valuation in ${{{-d}_{w+1}}}$. Notice that $v_w$ is the highest valuation in ${{{-d}_{w+1}}}$ when the winner tells the truth, therefore in this case ${V'}_{{{-d}_{w+1}}}^{*}  < v_w$ actually. So the winner's utility decreases by misreporting his valuation. So the best strategy for the winner is submitting a bid no less than his true valuation--$v_w' \ge v_w$.

%Because players don't know his real type until the auction is over, so the best response is getting the intersection element of all types' strategies. In other words, for any player $i$ the best response is ${v'_i} = {v_i}$, therefore $u_i (a_i,(v_i,r_i'), {\bf a'}_{-i}) \ge u_i (a_i,(v_i',r_i'), {\bf a'}_{-i})$.

Given buyer $i$'s declared valuation $v_i'=v_i$. We then show that for all kinds of buyers, diffusing the information to all the neighbors (i.e., $r'_i=r_i$) maximizes their utilities.

Case 1. If $i$ is a normal buyer or an unlucky buyer, according to Lemma {\ref{change_ri}}, $i$'s utility is always $0$.

Case 2. If $i$ is an on-path buyer, according to the Eq. (\ref{IDM_payment_rule}), her utility is $v_{{{-d}_{i+1}}}^{*} - v_{{{-d}_{i}}}^{*}$. When $r_i' \ne r_i$, according to Lemma {\ref{membership_di}}, $d_i$ may loss some members. Consequently, she may still be an on-path buyer. In this case, it's not hard to verify that $-d_{(i+1)'} \subseteq -d_{i+1}$ where $(i+1)'$ is the $(i+1)$th buyer in new diffusion critical sequence of the highest bid buyer. Hence, the utility of buyer $i$ may decrease since $v_{{{-d}_{(i+1)'}}}^{*} \le v_{{{-d}_{i+1}}}^{*}$. If she degenerates to the winner, then her utility is $v_i - v_{{{-d}_{i}}}^{*} < v_{{{-d}_{i+1}}}^{*} - v_{{{-d}_{i}}}^{*}$ since $v_i < v_{{{-d}_{i+1}}}^{*}$. If she degenerates to a normal buyer, her utility becomes 0. Thus for any on-path buyer $i$, diffusing the information to all her neighbors maximizes her utility.

Case 3. If $i$ is the winner, based on Lemma {\ref{membership_di}}, if $r_i' \ne r_i$, $d_{i}$ may loss some members. Since buyer $i$ is with the highest bid in $-d_{i+1}$, therefore according to allocation policy, $i$ is still the winner and her utility does not change since $-d_i$ is not affected. Therefore, $u_i (a_i,((v_i,r_i), {\bf a'}_{-i}),(\pi^{idm},p^{idm})) \ge u_i (a_i,((v_i,r_i'), {\bf a''}_{-i}),(\pi^{idm},p^{idm}))$ for all $i$ and all $r_i'$.

Putting together the above analysis, we get that $u_i (a_i,((v_i,r_i), {\bf a'}_{-i}),(\pi^{idm},p^{idm}))\ge u_i(a_i,((v_i',r_i'),{\bf a''}_{-i}),(\pi^{idm},p^{idm}))$ for all $i$, all $v_i'$ and all $r_i'$, i.e., IDM is incentive compatible.
\end{proof}
%When player $j \notin C_w$ through diffusing the information to all his neighbors, he cannot lie in $C_w$ through underreport some of his neighbors according to Lemma {\ref{change_pp}}; Therefore, their utility does not change which is always 0;
%For any player $k$ who lies in $C_w$ when diffusing the information to all his neighbors $r_k$, his utility is $v_{{{-d}_{k+1}}}^{*} - v_{{{-d}_{k}}}^{*}$ ; If he is not in $C_w$ by diffusing partial his neighbors; His utility becomes 0 which is less than the former one; If he is still in $C_w$; According to Lemma {\ref{membership_di}}, underreporting some neighbors only causes the number of players in $d_{k+1}$ decrease, that is, $v_{{{-d}_{k+1}}}^{*}$ may decrease; So for player $k$, the best strategy is diffusing the information to all his neighbors; As for the winner $w$, underreporting some neighbors only causes the number of players in $d_{k+1}$ and the winner is still the winner according to allocation policy.

As a comparison with the VCG mechanism in social networks which may decrease the seller's revenue and even leads to a deficit, in the following theorem we prove that IDM is always weakly budget balanced (non-negative revenue for the seller). More specifically, we show that the revenue of IDM is always above that of the VCG mechanism.

\begin{theorem}
\label{them_idm_rev}
The information diffusion mechanism is weakly budget-balanced and its revenue is always greater than or equal to the revenue of the VCG in social networks.
\end{theorem}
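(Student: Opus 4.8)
The plan is to derive closed-form expressions for both revenues and then compare them term by term. The key observation is that each revenue depends only on the quantities $v_{-d_i}^*$ taken along the diffusion critical sequence $C_m = \{1,2,\dots,m\}$ of the highest bidder $m$, together with the top report $v_m' = v_{N_{-s}}^*$. Since the ordering of $C_m$ satisfies $d_1 \supset d_2 \supset \cdots \supset d_m$, we have $-d_1 \subset -d_2 \subset \cdots \subset -d_m \subseteq N_{-s}$, and hence $v_{-d_1}^* \le v_{-d_2}^* \le \cdots \le v_{-d_m}^* \le v_m'$; these monotonicity facts will drive the final inequality.

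First I would evaluate the IDM revenue. By the payment rule~(\ref{IDM_payment_rule}), only the winner $w$ and the on-path buyers $C_w \setminus \{w\}$ make nonzero payments, and because $w \in C_m$ its diffusion critical sequence $C_w = \{1,2,\dots,w\}$ is an initial segment of $C_m$. Summing the payments yields a telescoping sum
\begin{equation*}
Rev^{idm}({\bf a'}) = \sum_{i=1}^{w-1}\bigl(v_{-d_i}^* - v_{-d_{i+1}}^*\bigr) + v_{-d_w}^* = v_{-d_1}^*.
\end{equation*}
Since valuations are non-negative (and a maximum over an empty participating set is taken to be $0$ by the convention $v_s=0$), this immediately gives $Rev^{idm}({\bf a'}) = v_{-d_1}^* \ge 0$, i.e.\ weak budget balance. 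Notably, the winner's index $w$ cancels in the telescoping, so the IDM revenue is always exactly $v_{-d_1}^*$, regardless of who wins.

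Next I would evaluate the VCG revenue. The efficient allocation awards the item to $m$, so $W({\bf a'}) = v_m'$ and $W({\bf a'}_{-d_i}) = v_{-d_i}^*$; substituting into~(\ref{eq_vcg_pay}) gives $p_m^{vcg} = v_{-d_m}^*$ and $p_i^{vcg} = v_{-d_i}^* - v_m'$ for $i \ne m$. The crucial bookkeeping step is to show that every buyer off the sequence contributes nothing: if $i \ne m$ and $i \notin C_m$, then $m \notin d_i$, so $m$ remains a participating top bidder inside $-d_i$, forcing $v_{-d_i}^* = v_m'$ and hence $p_i^{vcg} = 0$. Consequently only $m$ and its diffusion critical nodes carry payments, and
\begin{equation*}
Rev^{vcg}({\bf a'}) = v_{-d_m}^* + \sum_{i=1}^{m-1}\bigl(v_{-d_i}^* - v_m'\bigr).
\end{equation*}

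Finally I would compare the two. If $m$ has no diffusion critical node, then $C_m = \{m\}$ and both revenues equal $v_{-d_m}^*$, so the claim is immediate; otherwise, subtracting and cancelling the common $v_{-d_1}^*$ term leaves
\begin{equation*}
Rev^{idm}({\bf a'}) - Rev^{vcg}({\bf a'}) = (m-1)\,v_m' - v_{-d_m}^* - \sum_{i=2}^{m-1} v_{-d_i}^*.
\end{equation*}
The subtracted expression is a sum of exactly $m-1$ terms, each bounded above by $v_m'$ (every $-d_i \subseteq N_{-s}$ gives $v_{-d_i}^* \le v_{N_{-s}}^* = v_m'$), so the difference is non-negative and $Rev^{idm} \ge Rev^{vcg}$. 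I expect the main obstacle to be the bookkeeping rather than the algebra: namely, verifying that $C_w$ is an initial segment of $C_m$ so that the IDM payments telescope cleanly, and justifying that every off-sequence buyer contributes zero to the VCG revenue. Once both revenues are reduced to expressions in the $v_{-d_i}^*$, the monotonicity bound $v_{-d_i}^* \le v_m'$ closes the argument at once.
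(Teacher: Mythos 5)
Your proposal is correct and follows essentially the same route as the paper: telescope the IDM payments along $C_w$ to get $Rev^{idm} = v_{-d_1}^* \ge 0$, reduce the VCG revenue to payments from buyers in $C_m$ with $p_m^{vcg}=v_{-d_m}^*$ and $p_i^{vcg}=v_{-d_i}^*-v_m'$, and close with the bound $v_{-d_i}^* \le v_{N_{-s}}^* = v_m'$ (the paper performs an equivalent index shift where you subtract directly). If anything, you supply two details the paper merely asserts --- that off-sequence buyers pay zero under VCG (via $m \notin d_i$, so $m$ survives in ${\bf a'}_{-d_i}$) and that $C_w$ is an initial segment of $C_m$ so the telescoping is legitimate --- which strengthens rather than alters the argument.
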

\begin{proof}
According to the payment policy, the seller's revenue in IDM is $\sum\nolimits_{i \in C_w \setminus \{w\}} {(v_{{-d}_i}^{*} - v_{{-d}_{i+1}}^{*})} + v_{{-d}_w}^{*} = v_{{{-d}_{1}}}^{*}$, where buyer $1$ is the first buyer in $C_w$. Since $v_{{{-d}_{1}}}^{*}$ is non-negative, then IDM is weakly budget balanced.

In the VCG mechanism, if buyer $i \notin C_m$, her payment is always $0$ according to the payment policy. Therefore, the revenue of the VCG mechanism is only derived from the payments of buyers in $C_m$.
\begin{equation*}
\begin{aligned}
Rev^{VCG} &= \sum\nolimits_{i \in C_m} {(W_{{{-d}_i}}({\bf a'}_{{-d}_i}) - W_{N_{-s}\setminus\{i\}}({\bf a'}))}\\
 &= \sum\nolimits_{i \in C_m \setminus \{m\}} {(v_{{-d}_i}^{*} - {v_{N_{-s}}^{*}})} + (v_{{-d}_m}^{*} -0)\\
 &= \sum\nolimits_{i \in C_m \setminus \{1\}} {(v_{{-d}_i}^{*} - {v_{N_{-s}}^{*}})} + v_{{-d}_1}^{*}\\
 &\le  v_{{-d}_1}^{*}=Rev^{IDM}\end{aligned}
 \end{equation*}
Therefore the revenue of IDM is always equal or higher than that of the VCG mechanism.
\end{proof}

Moreover, following the proof of Theorem~\ref{them_idm_rev}, since $r_s \subseteq {{-d}_1} \cup \{1\} \subseteq {{-d}_{w+1}}$, one can easily verify that the revenue and efficiency of IDM are not less than those of the VCG mechanism without information diffusion (i.e., the second price auction involving the seller's neighbors $r_s$ only).

\section{Conclusions and Future Work}
In this paper, we generalized the classical single item auction into a social network setting, where auction information is propagated through the participants in the network and agents who cannot receive the information will not be involved in the auction. We proposed mechanisms to incentivize participants to propagate the auction information to their neighbors and to involve more buyers in the auction to increase the revenue of the seller. We found that although the VCG mechanism can be generalized to incentivize information diffusion, it will reduce the seller's revenue.
Therefore, we proposed another novel mechanism to not only incentivize information diffusion, but also simultaneously increase the social welfare and the seller's revenue.
%To propose a valid mechanism for auctions in presence of social networks, we design a information diffusion mechanism, under which each rational buyer has the incentive not only to reveal her valuation truthfully but also to diffuse the auction information to all her neighbors. It is proved that the auctioneer's revenue of this mechanism is always equal or better than VCG. Note that if there are multiple buyers with identical highest bids, then arbitrary tie-breaking rule will not affect the results in this paper.

One important future work is the study of social network based exchanges with multiple items, multiple sellers or combinatorial valuations such as ad auctions and the sharing economy platforms. Another interesting feature to consider in the current setting is the costs of the information propagation to buyers. It is also worth investigating how different social network structures affect the efficiency and revenue of our mechanism.
%Furthermore, our approach could be also helpful to study the impacts of the social ties between individuals participating in collective social choice settings~\cite{brandt2012computational}.
%It is crucial to emphasize that an important class of resource allocation problems involves multilateral exchanges where many sellers and many buyers exist (e.g., internet auctions, network market design and sharing economics) \cite{wosskow2014unlocking}, then it's probable that many participants would become the diffusion critical nodes for someone else and many buyers in this social network will benefit from our mechanism. To this point, in the future works, we will extend the current framework of information diffusion mechanism to multi-item auctions as well as combinatorial auctions in social networks. These results would also be helpful to study the impacts of the social ties between individuals participating in a collective social choice \cite{brandt2012computational}.
%Furthermore, in graph theory and complex network theory, besides cut vertex, there are several different ways to identify critical nodes~\cite{lu2016vital}. Thus, it is worth investigating how different social network structures affect on the efficiency and revenue of our mechanism.
%To promote the real-world applications, we will implement our mechanisms as simulation studies or real-world experiments. Finally, further research on epidemic processes \cite{pastor2015epidemic} in social networks could also be inspired by our contributions.

\section*{Acknowledgement}
This work was partially supported by the National Natural Science Foundation of China (NNSFC) under Grant No. $71601029$. Dengji Zhao also acknowledges the funding from the EPSRC-funded International Centre for Infrastructure Futures (ICIF) (EP/K012347/1).

\bibliographystyle{aaai}
\bibliography{MDS}

\end{document}